\newtheorem{lemma}{\it Lemma}
\newtheorem{definition}{\it Definition}
\newtheorem{proposition}{\it Proposition}
\title{\LARGE \bf
	Independent Gaussian Distributions Minimize the Kullback--Leibler (KL) Divergence from Independent Gaussian Distributions
}
\author{Song Fang$^{1}$ and Quanyan Zhu$^{1}$
\thanks{$^{1}$ Song Fang and Quanyan Zhu are with the Department of Electrical and Computer Engineering, New York University, New York, USA
        {\tt\small song.fang@nyu.edu; quanyan.zhu@nyu.edu}}%
}
\begin{document}

\maketitle
\thispagestyle{empty}
\pagestyle{empty}

\begin{abstract}
	
	This short note is on a property of the Kullback--Leibler (KL) divergence which indicates that independent Gaussian distributions minimize the KL divergence from given independent Gaussian distributions. The primary purpose of this note is for the referencing of papers that need to make use of this property entirely or partially.
\end{abstract}


\section{Introduction}

The Kullback--Leibler (KL) divergence was proposed in \cite{kullback1951information} (see also \cite{kullback1997information}), and has been employed in various research areas, including, e.g., information theory \cite{Cov:06}, signal processing \cite{Kay20}, statistics \cite{pardo2006statistical}, control and estimation theory \cite{lindquist2015linear}, and machine learning \cite{goodfellow2016deep}.

In this brief note, we present the property of the KL divergence that
independent Gaussian distributions minimize the KL divergence from given independent Gaussian distributions, while characterizing the corresponding minimum KL divergence in terms of covariance matrices. 
The primary purpose of this note is for the referencing of future papers that need to utilize this property entirely (see Proposition~\ref{c1}) or partially (see Proposition~\ref{p1}, Proposition~\ref{p2}, and Proposition~\ref{p3}), such as \cite{KLarxiv}. 

%
%
%
%
%

\section{Preliminaries}

Throughout the paper, we consider zero-mean real-valued continuous random variables and random vectors, as well as discrete-time stochastic processes. We represent random variables and random vectors using boldface letters, e.g., $\mathbf{x}$, while the probability density function of $\mathbf{x}$ is denoted as $p_\mathbf{x}$. We denote  random vector  $\left[ \mathbf{x}_0^{\mathrm{T}},\ldots,\mathbf{x}_{k}^{\mathrm{T}} \right]^{\mathrm{T}}$ by $\mathbf{x}_{0,\ldots,k}$ for simplicity. 
Below we introduce the definitions of entropy and conditional entropy (see, e.g., \cite{Cov:06}, where properties of these notions can also be found).

\begin{definition} The differential entropy of random vector $\mathbf{x}$ with density $p_{\mathbf{x}}$ is defined as
	\begin{flalign}
	h\left( \mathbf{x} \right)
	=-\int p_{\mathbf{x}} \left(\mathbf{u}\right) \ln p_{\mathbf{x}} \left(\mathbf{u}\right) \mathrm{d} \mathbf{u}. \nonumber
	\end{flalign}
	The conditional differential entropy of random vector $\mathbf{x}$ given random vector $\mathbf{y}$ with joint density $p_{\mathbf{x}, \mathbf{y}} $ and conditional density $p_{\mathbf{x} | \mathbf{y}} $ is defined as
	\begin{flalign}
	h\left(\mathbf{x}\middle|\mathbf{y}\right)
	=-\int p_{\mathbf{x}, \mathbf{y}} \left(\mathbf{u},\mathbf{v}\right)\ln p_{\mathbf{x} | \mathbf{y}} \left(\mathbf{u},\mathbf{v}\right) \mathrm{d}\mathbf{u}\mathrm{d}\mathbf{v}. \nonumber
	\end{flalign}
\end{definition}

\vspace{3mm}

The KL divergence (see, e.g., \cite{kullback1951information}) is defined as follows.

\begin{definition}
	Consider $m$-dimensional random vectors $\mathbf{x}$ and $\mathbf{y}$ with densities $p_\mathbf{x}$ and $p_\mathbf{y}$, respectively.
	The KL divergence from distribution $p_\mathbf{x}$
	to distribution $p_\mathbf{y}$ is defined as
	\begin{flalign}
	\mathrm{KL} \left(p_{\mathbf{y}} \| p_{\mathbf{x}} \right)
	= \int p_{\mathbf{y}} \left( \mathbf{u} \right) \ln \frac{p_{\mathbf{y}} \left( \mathbf{u} \right)}{ p_{ \mathbf{x} } \left( \mathbf{u} \right)} \mathrm{d} \mathbf{u}. \nonumber
	\end{flalign}
\end{definition}

\vspace{3mm}

The next lemma (see, e.g., \cite{Kay20}) provides an explicit expression of KL divergence in terms of covariance matrices for Gaussian random vectors; note that herein and in the sequel, all random variables and random vectors are assumed to be zero-mean.

\begin{lemma} \label{Gaussian}
	Consider $m$-dimensional Gaussian random vectors $\mathbf{x}$ and $\mathbf{y}$ with covariance matrices $\Sigma_\mathbf{x}$ and $\Sigma_\mathbf{y}$, respectively.
	The KL divergence from distribution $p_\mathbf{x}$
	to distribution $p_\mathbf{y}$ is given by
	\begin{flalign}
	\mathrm{KL} \left( p_{\mathbf{y}} \| p_{\mathbf{x}} \right) = \frac{1}{2} \left[ \mathrm{tr} \left( \Sigma_{\mathbf{y}} \Sigma_{\mathbf{x}}^{-1} \right) - \ln \det \left( \Sigma_{\mathbf{y}} \Sigma_{\mathbf{x}}^{-1} \right) - m \right].
	\nonumber
	\end{flalign}
\end{lemma}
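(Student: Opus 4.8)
The plan is to compute the KL divergence directly from its definition by substituting the explicit form of the zero-mean Gaussian densities. Recall that for an $m$-dimensional zero-mean Gaussian random vector $\mathbf{x}$ with (positive-definite) covariance $\Sigma_\mathbf{x}$, the density is $p_\mathbf{x}(\mathbf{u}) = (2\pi)^{-m/2} (\det\Sigma_\mathbf{x})^{-1/2} \exp(-\tfrac{1}{2}\mathbf{u}^\mathrm{T}\Sigma_\mathbf{x}^{-1}\mathbf{u})$, and similarly for $\mathbf{y}$. First I would form the log-ratio $\ln (p_\mathbf{y}(\mathbf{u})/p_\mathbf{x}(\mathbf{u}))$, which collapses to the sum of a constant term $\tfrac{1}{2}\ln(\det\Sigma_\mathbf{x}/\det\Sigma_\mathbf{y})$ and a quadratic term $\tfrac{1}{2}\mathbf{u}^\mathrm{T}(\Sigma_\mathbf{x}^{-1}-\Sigma_\mathbf{y}^{-1})\mathbf{u}$.

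Next I would integrate this log-ratio against $p_\mathbf{y}$, i.e., take the expectation $\mathbb{E}_{p_\mathbf{y}}[\cdot]$. The constant term integrates to itself and equals $-\tfrac{1}{2}\ln\det(\Sigma_\mathbf{y}\Sigma_\mathbf{x}^{-1})$. For the quadratic term, the key step is the identity $\mathbf{u}^\mathrm{T} A \mathbf{u} = \mathrm{tr}(A\mathbf{u}\mathbf{u}^\mathrm{T})$ together with linearity of trace and expectation, giving $\mathbb{E}_{p_\mathbf{y}}[\mathbf{u}^\mathrm{T} A \mathbf{u}] = \mathrm{tr}(A\,\mathbb{E}_{p_\mathbf{y}}[\mathbf{u}\mathbf{u}^\mathrm{T}]) = \mathrm{tr}(A\Sigma_\mathbf{y})$. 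Applying this with $A = \Sigma_\mathbf{x}^{-1}$ yields $\mathrm{tr}(\Sigma_\mathbf{x}^{-1}\Sigma_\mathbf{y}) = \mathrm{tr}(\Sigma_\mathbf{y}\Sigma_\mathbf{x}^{-1})$, and with $A = \Sigma_\mathbf{y}^{-1}$ yields $\mathrm{tr}(I_m) = m$.

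Collecting the three contributions gives $\mathrm{KL}(p_\mathbf{y}\|p_\mathbf{x}) = \tfrac{1}{2}\mathrm{tr}(\Sigma_\mathbf{y}\Sigma_\mathbf{x}^{-1}) - \tfrac{1}{2}\ln\det(\Sigma_\mathbf{y}\Sigma_\mathbf{x}^{-1}) - \tfrac{1}{2}m$, which is exactly the claimed expression. I do not anticipate a genuine obstacle here, since the computation is standard; the only points that warrant a word of care are the implicit assumption that $\Sigma_\mathbf{x}$ and $\Sigma_\mathbf{y}$ are nonsingular (so that the densities and $\Sigma_\mathbf{x}^{-1}$ are well defined), and the bookkeeping that identifies the second moment $\mathbb{E}_{p_\mathbf{y}}[\mathbf{u}\mathbf{u}^\mathrm{T}]$ with $\Sigma_\mathbf{y}$, which relies on the zero-mean assumption. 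Since the result is attributed to the literature, one could alternatively simply cite \cite{Kay20}, but the direct derivation above is short and self-contained.
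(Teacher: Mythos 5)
Your derivation is correct: the log-ratio of the two zero-mean Gaussian densities, the trace identity $\mathbb{E}_{p_\mathbf{y}}\left[\mathbf{u}^{\mathrm{T}} A \mathbf{u}\right] = \mathrm{tr}\left(A \Sigma_\mathbf{y}\right)$, and the collection of the constant, trace, and $\ln\det$ terms give exactly the stated formula, and your caveats (nonsingular covariances, zero mean) are the right ones to flag. Note that the paper itself does not prove this lemma---it simply cites the literature (e.g., Kay)---so there is nothing to diverge from; your computation is the standard one, and it is in fact the same trace-trick expansion of $-\int p_{\mathbf{y}}\ln p_{\mathbf{x}}$ that the paper later carries out explicitly in its proof of Proposition~\ref{p1}, there for a not-necessarily-Gaussian $\mathbf{y}$, so your argument is consistent with (and essentially a special case of) the machinery the paper uses.
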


\vspace{3mm}

In the scalar case (when $m=1$), Lemma~\ref{Gaussian} reduces to the following formula for Gaussian random variables:
\begin{flalign}
\mathrm{KL} \left( p_{\mathbf{y} } \| p_{\mathbf{x} } \right) = \frac{1}{2} \left[  \frac{ \sigma_{\mathbf{y} }^2}{\sigma_{\mathbf{x} }^2} 
-  \ln  \left( \frac{ \sigma_{\mathbf{y} }^2 }{\sigma_{\mathbf{x} }^2} \right) - 1 \right].
\nonumber
\end{flalign}

\section{KL Divergence-Minimizing Distributions}

Let us state the main result of this note first.

\begin{proposition} \label{c1}
	Consider $m$-dimensional random vectors $\mathbf{x}$ and $\mathbf{y}$ with positive definite covariance matrices $\Sigma_{\mathbf{x}}$ and $\Sigma_{\mathbf{y}}$, respectively. Suppose that $\mathbf{x}$ is Gaussian, whereas $\mathbf{y}$ is not necessarily Gaussian. In addition, suppose that $\Sigma_{\mathbf{x}}$ is a diagonal matrix, i.e.,
	\begin{flalign}
	\Sigma_{\mathbf{x}}
	= \Lambda_{\mathbf{x}}
	= \mathrm{diag} \left( \sigma_{\mathbf{x} \left( 1 \right)}^2, \ldots, \sigma_{\mathbf{x} \left( m \right)}^2 \right).
	\end{flalign}
	Denote the diagonal terms of $\Sigma_{\mathbf{y}}$ by $\sigma_{\mathbf{y} \left( 1 \right)}^2, \ldots, \sigma_{\mathbf{y} \left( m \right)}^2$, whereas $\Sigma_{\mathbf{y}}$ is not necessarily a diagonal matrix.
	Then,
	\begin{flalign}
	\mathrm{KL} \left( p_{\mathbf{y}} \| p_{\mathbf{x}} \right)
	\geq \frac{1}{2} \left\{ \sum_{i=1}^{m} \left[ \frac{ \sigma_{\mathbf{y} \left( i \right)}^2}{\sigma_{\mathbf{x} \left( i \right)}^2} \right]
	-  \sum_{i=1}^m \ln  \left[ \frac{ \sigma_{\mathbf{y} \left( i \right)}^2 }{\sigma_{\mathbf{x} \left( i \right)}^2} \right] - m \right\},
	\end{flalign}
	where equality holds if $\mathbf{y}$ is Gaussian while $\Sigma_{\mathbf{y}}$ is a diagonal matrix as
	\begin{flalign}
	\Sigma_{\mathbf{y}}
	= \Lambda_{\mathbf{y}}
	= \mathrm{diag} \left( \sigma_{\mathbf{y} \left( 1 \right)}^2, \ldots, \sigma_{\mathbf{y} \left( m \right)}^2 \right).
	\end{flalign}
\end{proposition}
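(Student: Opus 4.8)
The plan is to expand the KL divergence using its definition together with the explicit Gaussian form of $p_{\mathbf{x}}$. Since $\mathbf{x}$ is zero-mean Gaussian, $p_{\mathbf{x}}(\mathbf{u}) = (2\pi)^{-m/2} (\det \Sigma_{\mathbf{x}})^{-1/2} \exp(-\frac{1}{2} \mathbf{u}^{\mathrm{T}} \Sigma_{\mathbf{x}}^{-1} \mathbf{u})$, so that $\ln p_{\mathbf{x}}$ is a quadratic form in $\mathbf{u}$ plus a constant. Substituting this into the definition and splitting the integral, I would write $\mathrm{KL}(p_{\mathbf{y}} \| p_{\mathbf{x}}) = -h(\mathbf{y}) - \int p_{\mathbf{y}}(\mathbf{u}) \ln p_{\mathbf{x}}(\mathbf{u}) \, \mathrm{d}\mathbf{u}$. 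The second term is the expectation of $-\ln p_{\mathbf{x}}(\mathbf{y})$, which evaluates to $\frac{m}{2} \ln(2\pi) + \frac{1}{2} \ln \det \Sigma_{\mathbf{x}} + \frac{1}{2} \mathbb{E}[\mathbf{y}^{\mathrm{T}} \Sigma_{\mathbf{x}}^{-1} \mathbf{y}]$, and the last expectation equals $\mathrm{tr}(\Sigma_{\mathbf{x}}^{-1} \Sigma_{\mathbf{y}})$ since $\Sigma_{\mathbf{y}} = \mathbb{E}[\mathbf{y}\mathbf{y}^{\mathrm{T}}]$.

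The key observation is that because $\Sigma_{\mathbf{x}} = \Lambda_{\mathbf{x}}$ is diagonal, $\Sigma_{\mathbf{x}}^{-1}$ is diagonal as well, so $\mathrm{tr}(\Sigma_{\mathbf{x}}^{-1} \Sigma_{\mathbf{y}}) = \sum_{i=1}^{m} \sigma_{\mathbf{y}(i)}^2 / \sigma_{\mathbf{x}(i)}^2$ depends on $\Sigma_{\mathbf{y}}$ only through its diagonal entries; likewise $\ln \det \Sigma_{\mathbf{x}} = \sum_{i=1}^{m} \ln \sigma_{\mathbf{x}(i)}^2$. Hence $\mathrm{KL}(p_{\mathbf{y}} \| p_{\mathbf{x}}) = -h(\mathbf{y}) + \frac{m}{2} \ln(2\pi) + \frac{1}{2} \sum_{i=1}^{m} \ln \sigma_{\mathbf{x}(i)}^2 + \frac{1}{2} \sum_{i=1}^{m} \sigma_{\mathbf{y}(i)}^2 / \sigma_{\mathbf{x}(i)}^2$, in which every term other than $-h(\mathbf{y})$ is already of the desired form.

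It then remains to upper-bound $h(\mathbf{y})$. I would invoke subadditivity of differential entropy, $h(\mathbf{y}) \le \sum_{i=1}^{m} h(\mathbf{y}(i))$, together with the maximum-entropy property of the Gaussian for a fixed variance, $h(\mathbf{y}(i)) \le \frac{1}{2} \ln(2\pi e \, \sigma_{\mathbf{y}(i)}^2)$, which gives $h(\mathbf{y}) \le \frac{m}{2} \ln(2\pi e) + \frac{1}{2} \sum_{i=1}^{m} \ln \sigma_{\mathbf{y}(i)}^2$; equivalently, one may combine $h(\mathbf{y}) \le \frac{1}{2} \ln[(2\pi e)^m \det \Sigma_{\mathbf{y}}]$ with Hadamard's inequality $\det \Sigma_{\mathbf{y}} \le \prod_{i=1}^{m} \sigma_{\mathbf{y}(i)}^2$. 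Substituting this bound, the terms $\frac{m}{2} \ln(2\pi)$ and $-\frac{m}{2} \ln(2\pi e)$ combine to $-\frac{m}{2}$, while the logarithmic variance terms assemble into $-\sum_{i=1}^{m} \ln(\sigma_{\mathbf{y}(i)}^2 / \sigma_{\mathbf{x}(i)}^2)$, which yields exactly the claimed lower bound. For the equality statement, when $\mathbf{y}$ is Gaussian with $\Sigma_{\mathbf{y}} = \Lambda_{\mathbf{y}}$ diagonal, both the subadditivity step and the Gaussian bound hold with equality, so $h(\mathbf{y}) = \frac{m}{2} \ln(2\pi e) + \frac{1}{2} \sum_{i=1}^{m} \ln \sigma_{\mathbf{y}(i)}^2$ and the inequality closes; alternatively, this case follows directly from Lemma~\ref{Gaussian} applied to the two diagonal covariance matrices $\Lambda_{\mathbf{y}}$ and $\Lambda_{\mathbf{x}}$.

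The calculation is otherwise routine. The only genuinely substantive point is the one in the second paragraph: diagonality of $\Sigma_{\mathbf{x}}$ is precisely what forces the cross term $-\mathbb{E}[\ln p_{\mathbf{x}}(\mathbf{y})]$ to depend on $\Sigma_{\mathbf{y}}$ only through its diagonal, so the off-diagonal part of $\Sigma_{\mathbf{y}}$ and any non-Gaussianity of $\mathbf{y}$ can enter only via $h(\mathbf{y})$ and can only increase the divergence. A minor technical caveat is that all of the integrals above must be finite; this holds because $\mathbf{y}$ has a density and a positive-definite, hence finite, covariance matrix, so $h(\mathbf{y})$ is bounded above and $\mathbb{E}[\ln p_{\mathbf{x}}(\mathbf{y})]$ is finite, while in the degenerate case $h(\mathbf{y}) = -\infty$ the inequality is trivial.
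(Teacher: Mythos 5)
Your proposal is correct, and the single computation you describe does establish the proposition, but it takes a genuinely different route from the paper. The paper proceeds modularly: Proposition~\ref{p1} replaces $\mathbf{y}$ by its Gaussian surrogate $\mathbf{y}^{\mathrm{G}}$ (via the same expansion $\mathrm{KL}(p_{\mathbf{y}}\|p_{\mathbf{x}}) = -h(\mathbf{y}) + \frac{1}{2}\mathrm{tr}(\Sigma_{\mathbf{y}}\Sigma_{\mathbf{x}}^{-1}) + \ln[(2\pi)^{m/2}(\det\Sigma_{\mathbf{x}})^{1/2}]$ that you use, plus the maximum-entropy property); Proposition~\ref{p2} then shows, by a chain-rule/conditional-entropy argument valid for arbitrary distributions, that the KL divergence from an independent reference is superadditive over blocks; and Proposition~\ref{p3} converts this, via Lemma~\ref{Gaussian}, into the matrix inequality $\mathrm{tr}(\Sigma_{\mathbf{y}}\Sigma_{\mathbf{x}}^{-1}) - \ln\det(\Sigma_{\mathbf{y}}\Sigma_{\mathbf{x}}^{-1}) \ge \sum_i \sigma_{\mathbf{y}(i)}^2/\sigma_{\mathbf{x}(i)}^2 - \sum_i \ln(\sigma_{\mathbf{y}(i)}^2/\sigma_{\mathbf{x}(i)}^2)$. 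You instead collapse the coordinate-splitting step into a single upper bound on $h(\mathbf{y})$, using subadditivity of differential entropy together with the scalar Gaussian maximum-entropy bound (equivalently, $h(\mathbf{y}) \le \frac{1}{2}\ln[(2\pi e)^m \det\Sigma_{\mathbf{y}}]$ plus Hadamard's inequality), after observing that diagonality of $\Sigma_{\mathbf{x}}$ makes the cross-entropy term depend only on the diagonal of $\Sigma_{\mathbf{y}}$. Your argument is shorter and more self-contained, essentially re-deriving the content of Proposition~\ref{p1} and replacing Propositions~\ref{p2} and~\ref{p3} by Hadamard/subadditivity; what the paper's longer route buys is the intermediate results themselves, notably Proposition~\ref{p2}, which holds for general (non-Gaussian) distributions and arbitrary block structures, and the standalone positive-definite-matrix inequality extracted after Proposition~\ref{p3}. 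Your equality discussion and the remark on the degenerate case $h(\mathbf{y}) = -\infty$ are both sound.
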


\vspace{3mm}

Before we prove Proposition~\ref{c1}, we shall present the following three properties of the KL divergence in Proposition~\ref{p1}, Proposition~\ref{p2}, and Proposition~\ref{p3}, respectively, based upon which Proposition~\ref{c1} can then be established.

The first property indicates that Gaussian distribution minimizes, among all distributions with the same covariance, the KL divergence from a given Gaussian distribution. Note that although this property has been proved in the literature, e.g., recently in \cite{zhang2017stealthy} (see Lemma~1 therein), we still provide a proof below for the self-containment and consistence in notation of this paper. 

\begin{proposition} \label{p1}
	Consider $m$-dimensional random vectors $\mathbf{x}, \mathbf{y}$. Suppose that $\mathbf{x}$ is Gaussian.
	Then,
	\begin{flalign}
	\mathrm{KL} \left( p_{\mathbf{y}} \| p_{\mathbf{x}} \right)
	\geq \mathrm{KL} \left( p_{\mathbf{y}^{\mathrm{G}}} \| p_{\mathbf{x}} \right),
	\nonumber
	\end{flalign}
	where $\mathbf{y}^{\mathrm{G}}$ denotes a Gaussian random vector with the same covariance as $\mathbf{y}$, and equality holds if $\mathbf{y}$ is Gaussian.
\end{proposition}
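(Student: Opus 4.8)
The plan is to expand the KL divergence $\mathrm{KL}\left(p_{\mathbf{y}} \| p_{\mathbf{x}}\right)$ directly from its definition and exploit the special structure of the Gaussian reference density $p_{\mathbf{x}}$. Writing $p_{\mathbf{x}}\left(\mathbf{u}\right) = \left(2\pi\right)^{-m/2} \left(\det \Sigma_{\mathbf{x}}\right)^{-1/2} \exp\left(-\frac{1}{2}\mathbf{u}^{\mathrm{T}} \Sigma_{\mathbf{x}}^{-1} \mathbf{u}\right)$, the term $\int p_{\mathbf{y}}\left(\mathbf{u}\right) \ln p_{\mathbf{x}}\left(\mathbf{u}\right) \mathrm{d}\mathbf{u}$ becomes $-\frac{m}{2}\ln\left(2\pi\right) - \frac{1}{2}\ln\det\Sigma_{\mathbf{x}} - \frac{1}{2}\mathbb{E}\left[\mathbf{y}^{\mathrm{T}} \Sigma_{\mathbf{x}}^{-1} \mathbf{y}\right]$. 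The key observation is that $\ln p_{\mathbf{x}}$ is a quadratic form in $\mathbf{u}$, so its expectation under $p_{\mathbf{y}}$ depends on $\mathbf{y}$ only through its (zero mean and) covariance matrix $\Sigma_{\mathbf{y}}$; in particular $\mathbb{E}\left[\mathbf{y}^{\mathrm{T}} \Sigma_{\mathbf{x}}^{-1} \mathbf{y}\right] = \mathrm{tr}\left(\Sigma_{\mathbf{x}}^{-1}\Sigma_{\mathbf{y}}\right)$. Hence $\int p_{\mathbf{y}}\left(\mathbf{u}\right) \ln p_{\mathbf{x}}\left(\mathbf{u}\right) \mathrm{d}\mathbf{u} = \int p_{\mathbf{y}^{\mathrm{G}}}\left(\mathbf{u}\right) \ln p_{\mathbf{x}}\left(\mathbf{u}\right) \mathrm{d}\mathbf{u}$, since $\mathbf{y}$ and $\mathbf{y}^{\mathrm{G}}$ share the same covariance.

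Next I would handle the remaining term $\int p_{\mathbf{y}}\left(\mathbf{u}\right) \ln p_{\mathbf{y}}\left(\mathbf{u}\right) \mathrm{d}\mathbf{u} = -h\left(\mathbf{y}\right)$. The standard maximum-entropy fact is that among all $m$-dimensional random vectors with a fixed covariance $\Sigma_{\mathbf{y}}$, the Gaussian one maximizes differential entropy, i.e. $h\left(\mathbf{y}\right) \leq h\left(\mathbf{y}^{\mathrm{G}}\right)$, so $-h\left(\mathbf{y}\right) \geq -h\left(\mathbf{y}^{\mathrm{G}}\right)$. Combining this with the equality established for the cross term yields
\begin{flalign}
\mathrm{KL}\left(p_{\mathbf{y}} \| p_{\mathbf{x}}\right) = -h\left(\mathbf{y}\right) - \int p_{\mathbf{y}} \ln p_{\mathbf{x}} \geq -h\left(\mathbf{y}^{\mathrm{G}}\right) - \int p_{\mathbf{y}^{\mathrm{G}}} \ln p_{\mathbf{x}} = \mathrm{KL}\left(p_{\mathbf{y}^{\mathrm{G}}} \| p_{\mathbf{x}}\right), \nonumber
\end{flalign}
which is the claimed inequality. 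The equality condition is immediate: if $\mathbf{y}$ is itself Gaussian, then $\mathbf{y}$ and $\mathbf{y}^{\mathrm{G}}$ have the same distribution, so both sides coincide.

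The main obstacle — or rather the one step that deserves care — is the maximum-entropy inequality $h\left(\mathbf{y}\right) \leq h\left(\mathbf{y}^{\mathrm{G}}\right)$. This is classical (see, e.g., \cite{Cov:06}), and can itself be proved by a short KL-divergence argument: $0 \leq \mathrm{KL}\left(p_{\mathbf{y}} \| p_{\mathbf{y}^{\mathrm{G}}}\right) = -h\left(\mathbf{y}\right) - \int p_{\mathbf{y}} \ln p_{\mathbf{y}^{\mathrm{G}}} = -h\left(\mathbf{y}\right) + h\left(\mathbf{y}^{\mathrm{G}}\right)$, where the last equality again uses that $\ln p_{\mathbf{y}^{\mathrm{G}}}$ is quadratic and $\mathbf{y}$, $\mathbf{y}^{\mathrm{G}}$ share the second-order statistics. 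One should also note the mild technical caveats that all densities and entropies involved are assumed to exist and be finite (the covariances are positive definite, so $p_{\mathbf{x}}$ and $p_{\mathbf{y}^{\mathrm{G}}}$ are well-defined), and that the non-negativity of KL divergence (Gibbs' inequality) is used. Otherwise the argument is just bookkeeping.
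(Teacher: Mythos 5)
Your proof is correct and follows essentially the same route as the paper: expand $\mathrm{KL}\left(p_{\mathbf{y}} \| p_{\mathbf{x}}\right)$ into $-h\left(\mathbf{y}\right)$ plus a cross term that depends on $\mathbf{y}$ only through its covariance (via $\mathrm{tr}\left(\Sigma_{\mathbf{y}}\Sigma_{\mathbf{x}}^{-1}\right)$), then apply the Gaussian maximum-entropy inequality $h\left(\mathbf{y}\right) \leq h\left(\mathbf{y}^{\mathrm{G}}\right)$ and recombine. Your explicit justification of the maximum-entropy step via nonnegativity of KL divergence is a small addition the paper leaves implicit, but the argument is the same.
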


\begin{proof} Note that
	\begin{flalign}
	&\mathrm{KL} \left(p_{\mathbf{y}} \| p_{\mathbf{x}} \right)
	= \int p_{\mathbf{y}} \left( \mathbf{u} \right) \ln \frac{p_{\mathbf{y}} \left( \mathbf{u} \right)}{ p_{ \mathbf{x} } \left( \mathbf{u} \right)} \mathrm{d} \mathbf{u} \nonumber \\
	& = \int p_{\mathbf{y}} \left( \mathbf{u} \right) \ln p_{\mathbf{y}} \left( \mathbf{u} \right) \mathrm{d} \mathbf{u}
	- \int p_{\mathbf{y}} \left( \mathbf{u} \right) \ln p_{\mathbf{x}} \left( \mathbf{u} \right) \mathrm{d} \mathbf{u} \nonumber \\
	& = - h \left(\mathbf{y} \right)
	- \int p_{\mathbf{y}} \left( \mathbf{u} \right) \ln \left[ \frac{1}{ \left( 2 \pi \right)^{\frac{m}{2}} \left( \det  \Sigma_{\mathbf{x}} \right)^{\frac{1}{2}} } \mathrm{e}^{ - \frac{1}{2} \mathbf{u}^{\mathrm{T}} \Sigma_{\mathbf{x}}^{-1} \mathbf{u} } \right] \mathrm{d} \mathbf{u} \nonumber \\
	& = - h \left(\mathbf{y} \right)
	- \int p_{\mathbf{y}} \left( \mathbf{u} \right) \ln \left( \mathrm{e}^{ - \frac{1}{2} \mathbf{u}^{\mathrm{T}} \Sigma_{\mathbf{x}}^{-1} \mathbf{u} } \right) \mathrm{d} \mathbf{u} \nonumber \\
	&~~~~ - \ln \left[ \frac{1}{ \left( 2 \pi \right)^{\frac{m}{2}} \left( \det  \Sigma_{\mathbf{x}} \right)^{\frac{1}{2}} }  \right] \nonumber \\
	\nonumber \\
	& = - h \left(\mathbf{y} \right)
	+ \frac{1}{2} \int p_{\mathbf{y}} \left( \mathbf{u} \right)  \mathbf{u}^{\mathrm{T}} \Sigma_{\mathbf{x}}^{-1} \mathbf{u}  \mathrm{d} \mathbf{u} + \ln \left[  \left( 2 \pi \right)^{\frac{m}{2}} \left( \det  \Sigma_{\mathbf{x}} \right)^{\frac{1}{2}}   \right] \nonumber \\
	& = - h \left(\mathbf{y} \right)
	+ \frac{1}{2} \mathbb{E} \left( \mathbf{y}^{\mathrm{T}} \Sigma_{\mathbf{x}}^{-1} \mathbf{y} \right) + \ln \left[  \left( 2 \pi \right)^{\frac{m}{2}} \left( \det  \Sigma_{\mathbf{x}} \right)^{\frac{1}{2}}   \right] \nonumber \\
	&= - h \left(\mathbf{y} \right)
	+ \frac{1}{2} \mathrm{tr} \left[ \mathbb{E} \left( \mathbf{y} \mathbf{y}^{\mathrm{T}} \Sigma_{\mathbf{x}}^{-1} \right) \right] + \ln \left[  \left( 2 \pi \right)^{\frac{m}{2}} \left( \det  \Sigma_{\mathbf{x}} \right)^{\frac{1}{2}}   \right] \nonumber \\
	&= - h \left(\mathbf{y} \right)
	+ \frac{1}{2} \mathrm{tr} \left[ \mathbb{E} \left( \mathbf{y} \mathbf{y}^{\mathrm{T}}  \right) \Sigma_{\mathbf{x}}^{-1} \right] + \ln \left[  \left( 2 \pi \right)^{\frac{m}{2}} \left( \det  \Sigma_{\mathbf{x}} \right)^{\frac{1}{2}}   \right] \nonumber \\
	& = - h \left(\mathbf{y} \right)
	+ \frac{1}{2} \mathrm{tr} \left( \Sigma_{\mathbf{y}} \Sigma_{\mathbf{x}}^{-1} \right) + \ln \left[  \left( 2 \pi \right)^{\frac{m}{2}} \left( \det  \Sigma_{\mathbf{x}} \right)^{\frac{1}{2}}   \right] \nonumber \\
	& \geq - h \left(\mathbf{y}^{\mathrm{G}} \right)
	+ \frac{1}{2} \mathrm{tr} \left( \Sigma_{\mathbf{y}} \Sigma_{\mathbf{x}}^{-1} \right) + \ln \left[  \left( 2 \pi \right)^{\frac{m}{2}} \left( \det  \Sigma_{\mathbf{x}} \right)^{\frac{1}{2}}   \right], \nonumber
	\end{flalign}
	where $\mathbf{y}^{\mathrm{G}}$ denotes a Gaussian random vector with the same covariance as $\mathbf{y}$, and equality holds if $\mathbf{y}$ is Gaussian. Meanwhile, similarly (in a backward manner) to the previous derivations, it can be verified that 
	\begin{flalign}
	&- h \left(\mathbf{y}^{\mathrm{G}} \right)
	+ \frac{1}{2} \mathrm{tr} \left( \Sigma_{\mathbf{y}} \Sigma_{\mathbf{x}}^{-1} \right) + \ln \left[  \left( 2 \pi \right)^{\frac{m}{2}} \left( \det  \Sigma_{\mathbf{x}} \right)^{\frac{1}{2}}   \right] \nonumber \\
	&~~~~ = \mathrm{KL} \left( p_{\mathbf{y}^{\mathrm{G}}} \| p_{\mathbf{x}} \right), \nonumber
	\end{flalign}
	which completes the proof.
\end{proof}

The second property mandates that independent distributions minimize the KL divergence from given independent distributions. In comparison, the mutual information (or entropy) counterpart of this property can be found in \cite{fang2017towards} (see Lemma~2.5 therein).

\begin{proposition} \label{p2}
	Consider $m_1$-dimensional  random vectors $\mathbf{x}_{1}, \mathbf{y}_{1}$ and $m_2$-dimensional $\mathbf{x}_{2}, \mathbf{y}_{2}$. Suppose that $\mathbf{x}_{1}$ and $\mathbf{x}_{2}$ are independent.
	Then,
	\begin{flalign}
	\mathrm{KL} \left( p_{\mathbf{y}_{1}, \mathbf{y}_{2}} \| p_{\mathbf{x}_{1}, \mathbf{x}_{2}} \right)
	\geq \mathrm{KL} \left( p_{\mathbf{y}_{1}} \| p_{\mathbf{x}_{1}} \right) + \mathrm{KL} \left( p_{\mathbf{y}_{2}} \| p_{\mathbf{x}_{2}} \right),
	\end{flalign}
	where equality holds if $\mathbf{y}_{1}$ and $\mathbf{y}_{2}$ are independent.
	
	More generally, consider $m_1$-dimensional random vectors $\mathbf{x}_{1}, \mathbf{y}_{1}$, $m_2$-dimensional $\mathbf{x}_{2}, \mathbf{y}_{2}$, up to $m_n$-dimensional $\mathbf{x}_{n}, \mathbf{y}_{n}$. Suppose that $\mathbf{x}_{1}, \mathbf{x}_{2}, \ldots, \mathbf{x}_{n}$ are mutually independent.
	Then,
	\begin{flalign}
	\mathrm{KL} \left( p_{\mathbf{y}_{1, 2, \ldots, n}} \| p_{\mathbf{x}_{1, 2, \ldots, n}} \right)
	\geq \sum_{i=1}^{n} \mathrm{KL} \left( p_{\mathbf{y}_{i}}\| p_{\mathbf{x}_{i}} \right),
	\nonumber
	\end{flalign}
	where equality holds if $\mathbf{y}_{1}, \mathbf{y}_{2}, \ldots, \mathbf{y}_{n}$ are mutually independent.
\end{proposition}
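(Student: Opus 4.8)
The plan is to prove the two‑block inequality by directly manipulating the defining integral, using that the independence of $\mathbf{x}_{1}$ and $\mathbf{x}_{2}$ makes the reference density factorize, and then to obtain the $n$‑block version by induction (or by the same manipulation carried out at once). As elsewhere in the paper, I assume all the densities in question exist and the divergences involved are finite, so the manipulations are legitimate.

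For the two‑block case, independence of $\mathbf{x}_{1}$ and $\mathbf{x}_{2}$ gives $p_{\mathbf{x}_{1}, \mathbf{x}_{2}} \left( \mathbf{u}_{1}, \mathbf{u}_{2} \right) = p_{\mathbf{x}_{1}} \left( \mathbf{u}_{1} \right) p_{\mathbf{x}_{2}} \left( \mathbf{u}_{2} \right)$. Inside the logarithm in $\mathrm{KL} \left( p_{\mathbf{y}_{1}, \mathbf{y}_{2}} \| p_{\mathbf{x}_{1}, \mathbf{x}_{2}} \right)$ I would multiply and divide by $p_{\mathbf{y}_{1}} \left( \mathbf{u}_{1} \right) p_{\mathbf{y}_{2}} \left( \mathbf{u}_{2} \right)$ and split the logarithm into three additive terms; in the two terms involving $\ln \left[ p_{\mathbf{y}_{1}} / p_{\mathbf{x}_{1}} \right]$ and $\ln \left[ p_{\mathbf{y}_{2}} / p_{\mathbf{x}_{2}} \right]$ the complementary variable integrates out against $p_{\mathbf{y}_{1}, \mathbf{y}_{2}}$, leaving the corresponding marginal. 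This produces the identity
\begin{flalign}
& \mathrm{KL} \left( p_{\mathbf{y}_{1}, \mathbf{y}_{2}} \| p_{\mathbf{x}_{1}, \mathbf{x}_{2}} \right)
= \mathrm{KL} \left( p_{\mathbf{y}_{1}, \mathbf{y}_{2}} \| p_{\mathbf{y}_{1}} p_{\mathbf{y}_{2}} \right) \nonumber \\
& \qquad + \mathrm{KL} \left( p_{\mathbf{y}_{1}} \| p_{\mathbf{x}_{1}} \right)
+ \mathrm{KL} \left( p_{\mathbf{y}_{2}} \| p_{\mathbf{x}_{2}} \right), \nonumber
\end{flalign}
where $p_{\mathbf{y}_{1}} p_{\mathbf{y}_{2}}$ is the product of the marginal densities. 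The first term on the right‑hand side is nonnegative, since the KL divergence between two densities is always nonnegative (by Jensen's inequality applied to the convex function $-\ln$), and it equals zero precisely when $p_{\mathbf{y}_{1}, \mathbf{y}_{2}} = p_{\mathbf{y}_{1}} p_{\mathbf{y}_{2}}$, i.e., when $\mathbf{y}_{1}$ and $\mathbf{y}_{2}$ are independent. Discarding it yields the stated inequality together with its equality condition.

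For the general statement I would induct on $n$. Grouping $\mathbf{x}_{1, 2, \ldots, n-1}$ with $\mathbf{x}_{n}$, which are independent because $\mathbf{x}_{1}, \ldots, \mathbf{x}_{n}$ are mutually independent, the two‑block case peels off $\mathrm{KL} \left( p_{\mathbf{y}_{n}} \| p_{\mathbf{x}_{n}} \right)$, and the induction hypothesis applied to the remaining $n-1$ blocks completes the bound. For equality, the two‑block step forces $\mathbf{y}_{1, 2, \ldots, n-1}$ to be independent of $\mathbf{y}_{n}$ while the induction hypothesis forces $\mathbf{y}_{1}, \ldots, \mathbf{y}_{n-1}$ to be mutually independent, and together these are equivalent to mutual independence of $\mathbf{y}_{1}, \ldots, \mathbf{y}_{n}$. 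Alternatively, the same three‑way split performed directly on the $n$‑block divergence gives $\mathrm{KL} ( p_{\mathbf{y}_{1, 2, \ldots, n}} \| p_{\mathbf{x}_{1, 2, \ldots, n}} ) = \mathrm{KL} ( p_{\mathbf{y}_{1, 2, \ldots, n}} \| \prod_{i=1}^{n} p_{\mathbf{y}_{i}} ) + \sum_{i=1}^{n} \mathrm{KL} ( p_{\mathbf{y}_{i}} \| p_{\mathbf{x}_{i}} )$, from which one concludes as before, the leading term being nonnegative and vanishing iff $\mathbf{y}_{1}, \ldots, \mathbf{y}_{n}$ are mutually independent.

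I do not anticipate a genuine obstacle: the argument is a chain‑rule‑type identity together with the non‑negativity of the KL divergence, paralleling its entropy/mutual‑information counterpart. The only points needing a little care are the bookkeeping when splitting the logarithm and integrating out the complementary variable, so that the correct marginals of $p_{\mathbf{y}_{1}, \mathbf{y}_{2}}$ (and of $p_{\mathbf{y}_{1, 2, \ldots, n}}$ in the general case) appear, and verifying that each term in the split identity is finite so that the decomposition is not an indeterminate form.
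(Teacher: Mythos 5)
Your proof is correct, and it takes a slightly different route from the paper's. The paper factors the $\mathbf{y}$-side via the chain rule, writing $p_{\mathbf{y}_1,\mathbf{y}_2}=p_{\mathbf{y}_1}p_{\mathbf{y}_2|\mathbf{y}_1}$, isolating $\mathrm{KL}(p_{\mathbf{y}_1}\|p_{\mathbf{x}_1})$ plus a cross term involving $-h(\mathbf{y}_2|\mathbf{y}_1)-\int p_{\mathbf{y}_2}\ln p_{\mathbf{x}_2}$, and then invokes the fact that conditioning reduces differential entropy, $h(\mathbf{y}_2|\mathbf{y}_1)\le h(\mathbf{y}_2)$, to obtain the bound; you instead add and subtract $\ln\left[p_{\mathbf{y}_1}p_{\mathbf{y}_2}\right]$ inside the logarithm, which yields the exact decomposition $\mathrm{KL}(p_{\mathbf{y}_1,\mathbf{y}_2}\|p_{\mathbf{x}_1,\mathbf{x}_2})=\mathrm{KL}(p_{\mathbf{y}_1,\mathbf{y}_2}\|p_{\mathbf{y}_1}p_{\mathbf{y}_2})+\mathrm{KL}(p_{\mathbf{y}_1}\|p_{\mathbf{x}_1})+\mathrm{KL}(p_{\mathbf{y}_2}\|p_{\mathbf{x}_2})$ and then discards the nonnegative first term. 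The two arguments are equivalent in substance (the paper's entropy gap $h(\mathbf{y}_2)-h(\mathbf{y}_2|\mathbf{y}_1)$ is exactly your mutual-information term), but your version buys something: the slack is identified explicitly as $\mathrm{KL}(p_{\mathbf{y}_1,\mathbf{y}_2}\|p_{\mathbf{y}_1}p_{\mathbf{y}_2})$, so you get the equality condition as an if-and-only-if (the paper only states the "if" direction), and it rests on a single primitive fact, nonnegativity of KL, rather than on the conditional-entropy inequality. Your induction for the $n$-block case mirrors the paper's peeling argument (and your direct $n$-way split is a clean alternative the paper does not spell out). The only caveats, which you already flag, are the standing finiteness/absolute-continuity assumptions needed so the three-term split is not an indeterminate form.
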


\begin{proof} Note first that
	\begin{flalign}
	&\mathrm{KL} \left( p_{\mathbf{y}_{1}, \mathbf{y}_{2}} \| p_{\mathbf{x}_{1}, \mathbf{x}_{2}} \right) \nonumber \\
	&~~~~ =  \iint p_{\mathbf{y}_{1}, \mathbf{y}_{2}} \left( \mathbf{u}, \mathbf{v} \right) \ln \frac{ p_{ \mathbf{y}_{1}, \mathbf{y}_{2} } \left( \mathbf{u}, \mathbf{v} \right)}{p_{\mathbf{x}_{1}, \mathbf{x}_{2}} \left( \mathbf{u}, \mathbf{v} \right)} \mathrm{d} \mathbf{u}
	\mathrm{d} \mathbf{v}
	\nonumber \\
	&~~~~ =  \iint p_{\mathbf{y}_{1}, \mathbf{y}_{2}} \left( \mathbf{u}, \mathbf{v} \right) \ln \frac{
		p_{ \mathbf{y}_{1} } \left( \mathbf{u} \right)
		p_{ \mathbf{y}_{2} | \mathbf{y}_{1} } \left( \mathbf{u}, \mathbf{v} \right)}{p_{\mathbf{x}_{1}} \left( \mathbf{u} \right) p_{\mathbf{x}_{2}| \mathbf{x}_{1}} \left( \mathbf{u}, \mathbf{v} \right)} \mathrm{d} \mathbf{u}
	\mathrm{d} \mathbf{v} 
	\nonumber \\
	&~~~~ =  \iint p_{\mathbf{y}_{1}, \mathbf{y}_{2}} \left( \mathbf{u}, \mathbf{v} \right) \ln \frac{
		p_{ \mathbf{y}_{1} } \left( \mathbf{u} \right)}{p_{\mathbf{x}_{1}} \left( \mathbf{u} \right)} \mathrm{d} \mathbf{u}
	\mathrm{d} \mathbf{v} \nonumber \\
	&~~~~~~~~ +  \iint p_{\mathbf{y}_{1}, \mathbf{y}_{2}} \left( \mathbf{u}, \mathbf{v} \right) \ln \frac{
		p_{ \mathbf{y}_{2} | \mathbf{y}_{1} } \left( \mathbf{u}, \mathbf{v} \right)}{ p_{\mathbf{x}_{2}| \mathbf{x}_{1}} \left( \mathbf{u}, \mathbf{v} \right)} \mathrm{d} \mathbf{u}
	\mathrm{d} \mathbf{v}
	\nonumber \\
	&~~~~ =  \int p_{\mathbf{y}_{1}} \left( \mathbf{u} \right) \ln \frac{
		p_{ \mathbf{y}_{1} } \left( \mathbf{u} \right)}{p_{\mathbf{x}_{1}} \left( \mathbf{u} \right)} \mathrm{d} \mathbf{u} \nonumber \\
	&~~~~~~~~ +  \iint p_{\mathbf{y}_{1}, \mathbf{y}_{2}} \left( \mathbf{u}, \mathbf{v} \right) \ln \frac{
		p_{ \mathbf{y}_{2} | \mathbf{y}_{1} } \left( \mathbf{u}, \mathbf{v} \right)}{ p_{\mathbf{x}_{2}| \mathbf{x}_{1}} \left( \mathbf{u}, \mathbf{v} \right)} \mathrm{d} \mathbf{u}
	\mathrm{d} \mathbf{v}
	\nonumber \\
	&~~~~ =  \mathrm{KL} \left( p_{\mathbf{y}_{1}} \| p_{\mathbf{x}_{1}} \right) \nonumber \\
	&~~~~~~~~ +  \iint p_{\mathbf{y}_{1}, \mathbf{y}_{2}} \left( \mathbf{u}, \mathbf{v} \right) \ln  p_{\mathbf{y}_{2}| \mathbf{y}_{1}} \left( \mathbf{u}, \mathbf{v} \right)
	\mathrm{d} \mathbf{u}
	\mathrm{d} \mathbf{v} \nonumber \\
	&~~~~~~~~ -  \iint p_{\mathbf{y}_{1}, \mathbf{y}_{2}} \left( \mathbf{u}, \mathbf{v} \right) \ln	p_{ \mathbf{x}_{2} | \mathbf{x}_{1} } \left( \mathbf{u}, \mathbf{v} \right) \mathrm{d} \mathbf{u}
	\mathrm{d} \mathbf{v}
	\nonumber \\
	&~~~~ =  \mathrm{KL} \left( p_{\mathbf{y}_{1}} \| p_{\mathbf{x}_{1}} \right) - h \left( \mathbf{y}_{2} | \mathbf{y}_{1} \right) \nonumber \\
	&~~~~~~~~ -  \iint p_{\mathbf{y}_{1}, \mathbf{y}_{2}} \left( \mathbf{u}, \mathbf{v} \right) \ln	p_{ \mathbf{x}_{2} | \mathbf{x}_{1} } \left( \mathbf{u}, \mathbf{v} \right) \mathrm{d} \mathbf{u}
	\mathrm{d} \mathbf{v}
	.
	\nonumber
	\end{flalign}
	Then, since $\mathbf{x}_{1}$ and $\mathbf{x}_{2}$ are independent, we have 
	\begin{flalign}
	p_{\mathbf{x}_{2}| \mathbf{x}_{1}} \left( \mathbf{u}, \mathbf{v} \right) = p_{\mathbf{x}_{2}} \left( \mathbf{v} \right), \nonumber
	\end{flalign}
	and thus
	\begin{flalign}
	&\mathrm{KL} \left( p_{\mathbf{y}_{1}} \| p_{\mathbf{x}_{1}} \right) - h \left( \mathbf{y}_{2} | \mathbf{y}_{1} \right) \nonumber \\
	&~~~~ -  \iint p_{\mathbf{y}_{1}, \mathbf{y}_{2}} \left( \mathbf{u}, \mathbf{v} \right) \ln	p_{ \mathbf{x}_{2} | \mathbf{x}_{1} } \left( \mathbf{u}, \mathbf{v} \right) \mathrm{d} \mathbf{u}
	\mathrm{d} \mathbf{v} \nonumber \\
	&~~~~ = \mathrm{KL} \left( p_{\mathbf{y}_{1}} \| p_{\mathbf{x}_{1}} \right) - h \left( \mathbf{y}_{2} | \mathbf{y}_{1} \right) \nonumber \\
	&~~~~ -  \iint p_{\mathbf{y}_{1}, \mathbf{y}_{2}} \left( \mathbf{u}, \mathbf{v} \right) \ln	p_{ \mathbf{x}_{2} } \left( \mathbf{v} \right) \mathrm{d} \mathbf{u}
	\mathrm{d} \mathbf{v}
	\nonumber \\
	&~~~~ = \mathrm{KL} \left( \mathbf{y}_{1} \| \mathbf{x}_{1} \right) - h \left( \mathbf{y}_{2} | \mathbf{y}_{1} \right) -  \int p_{\mathbf{y}_{2}} \left( \mathbf{v} \right) \ln	p_{ \mathbf{x}_{2} } \left( \mathbf{v} \right)
	\mathrm{d} \mathbf{v}
	\nonumber \\
	&~~~~ \geq \mathrm{KL} \left( \mathbf{y}_{1} \| \mathbf{x}_{1} \right) - h \left( \mathbf{y}_{2} \right) -  \int p_{\mathbf{y}_{2}} \left( \mathbf{v} \right) \ln	p_{ \mathbf{x}_{2} } \left( \mathbf{v} \right)
	\mathrm{d} \mathbf{v}
	, \nonumber
	\end{flalign}
	where equality holds if $\mathbf{y}_{1}$ and $\mathbf{y}_{2}$ are independent. 
	On the other hand,
	\begin{flalign}
	&- h \left( \mathbf{y}_{2} \right) -  \int p_{\mathbf{y}_{2}} \left( \mathbf{v} \right) \ln	p_{ \mathbf{x}_{2} } \left( \mathbf{v} \right)
	\mathrm{d} \mathbf{v}  \nonumber \\
	&~~~~ = \int p_{\mathbf{y}_{2}} \left( \mathbf{v} \right) \ln	p_{ \mathbf{y}_{2} } \left( \mathbf{v} \right)
	\mathrm{d} \mathbf{v} 
	-  \int p_{\mathbf{y}_{2}} \left( \mathbf{v} \right) \ln	p_{ \mathbf{x}_{2} } \left( \mathbf{v} \right)
	\mathrm{d} \mathbf{v} \nonumber \\
	&~~~~ = \int p_{\mathbf{y}_{2}} \left( \mathbf{v} \right) \ln	\frac{p_{ \mathbf{y}_{2} } \left( \mathbf{v} \right)}{p_{ \mathbf{x}_{2} } \left( \mathbf{v} \right)}
	\mathrm{d} \mathbf{v}
	= \mathrm{KL} \left( p_{\mathbf{y}_{2}} \| p_{\mathbf{x}_{2}} \right)
	. \nonumber
	\end{flalign}
	Consequently,
	\begin{flalign}
	\mathrm{KL} \left( p_{\mathbf{y}_{1}, \mathbf{y}_{2}} \| p_{\mathbf{x}_{1}, \mathbf{x}_{2}} \right)
	\geq \mathrm{KL} \left( p_{\mathbf{y}_{1}} \| p_{\mathbf{x}_{1}} \right) + \mathrm{KL} \left( p_{\mathbf{y}_{2}} \| p_{\mathbf{x}_{2}} \right), \nonumber
	\end{flalign}
	where equality holds if $\mathbf{y}_{1}$ and $\mathbf{y}_{2}$ are independent.
	
	More generally, since 
	$\mathbf{x}_{1}, \mathbf{x}_{2}, \ldots, \mathbf{x}_{n}$ are mutually independent, we have
	\begin{flalign}
	&\mathrm{KL} \left( p_{\mathbf{y}_{1,\ldots,n}} \| p_{\mathbf{x}_{1,\ldots,n}} \right) \nonumber \\
	&~~~~ \geq \mathrm{KL} \left( p_{\mathbf{y}_{1,\ldots,n-1}}\| p_{\mathbf{x}_{1,\ldots,n-1}} \right) + \mathrm{KL} \left( p_{\mathbf{y}_{n}}\| p_{\mathbf{x}_{n}} \right)\nonumber \\
	&~~~~ \geq \mathrm{KL} \left( p_{\mathbf{y}_{0,\ldots,n-2}}\| p_{\mathbf{x}_{0,\ldots,n-2}} \right) + \mathrm{KL} \left( p_{\mathbf{y}_{n-1}}\| p_{\mathbf{x}_{n-1}} \right) \nonumber \\
	&~~~~~~~~ + \mathrm{KL} \left( p_{\mathbf{y}_{n}}\| p_{\mathbf{x}_{n}} \right)\nonumber \\
	&~~~~ \geq \cdots 
	\geq \sum_{i=1}^{n} \mathrm{KL} \left( p_{\mathbf{y}_{i}}\| p_{\mathbf{x}_{i}} \right),
	\nonumber
	\end{flalign}
	and
	\begin{flalign}
	\mathrm{KL} \left( p_{\mathbf{y}_{1,\ldots,n}} \| p_{\mathbf{x}_{1,\ldots,n}} \right)
	= \sum_{i=1}^{n} \mathrm{KL} \left( p_{\mathbf{y}_{i}}\| p_{\mathbf{x}_{i}} \right)
	\nonumber
	\end{flalign}
	if $\mathbf{y}_{1}, \mathbf{y}_{2}, \ldots, \mathbf{y}_{n}$ are mutually independent.
\end{proof}

The third property is a direct consequence of the second property; in fact, it is a property of positive definite matrices, as will be discussed after Proposition~\ref{p3}. To compare with, the mutual information (or entropy) counterpart of this property can be found in \cite{fang2017towards} (see Lemma~2.9 therein).

\begin{proposition} \label{p3}
	Consider $m$-dimensional  Gaussian random vectors $\mathbf{x}$ and $\mathbf{y}$ with positive definite covariance matrices $\Sigma_{\mathbf{x}}$ and $\Sigma_{\mathbf{y}}$, respectively. Suppose that $\Sigma_{\mathbf{x}}$ is a diagonal matrix, i.e.,
	\begin{flalign}
	\Sigma_{\mathbf{x}}
	= \Lambda_{\mathbf{x}}
	= \mathrm{diag} \left( \sigma_{\mathbf{x} \left( 1 \right)}^2, \ldots, \sigma_{\mathbf{x} \left( m \right)}^2 \right).
	\end{flalign}
	Denote the diagonal terms of $\Sigma_{\mathbf{y}}$ by $\sigma_{\mathbf{y} \left( 1 \right)}^2, \ldots, \sigma_{\mathbf{y} \left( m \right)}^2$; note that $\Sigma_{\mathbf{y}}$ is not necessarily a diagonal matrix.
	Then,
	\begin{flalign}
	&\frac{1}{2} \left[ \mathrm{tr} \left( \Sigma_{\mathbf{y}} \Sigma_{\mathbf{x}}^{-1} \right) - \ln \det \left( \Sigma_{\mathbf{y}} \Sigma_{\mathbf{x}}^{-1} \right) - m \right] \nonumber \\
	&~~~~ = \mathrm{KL} \left( p_{\mathbf{y}} \| p_{\mathbf{x}} \right)
	\nonumber \\
	&~~~~ \geq \sum_{i=1}^{m} \mathrm{KL} \left( p_{\mathbf{y} \left( i \right)} \| p_{\mathbf{x} \left( i \right)} \right) \nonumber \\
	&~~~~ = \frac{1}{2} \left\{ \sum_{i=1}^{m} \left[ \frac{ \sigma_{\mathbf{y} \left( i \right)}^2}{\sigma_{\mathbf{x} \left( i \right)}^2} \right]
	-  \sum_{i=1}^m \ln  \left[ \frac{ \sigma_{\mathbf{y} \left( i \right)}^2 }{\sigma_{\mathbf{x} \left( i \right)}^2} \right] - m \right\},
	\nonumber
	\end{flalign}
	where equality holds if $\Sigma_{\mathbf{y}}$ is a diagonal matrix as
	\begin{flalign}
	\Sigma_{\mathbf{y}}
	= \Lambda_{\mathbf{y}}
	= \mathrm{diag} \left( \sigma_{\mathbf{y} \left( 1 \right)}^2, \ldots, \sigma_{\mathbf{y} \left( m \right)}^2 \right).
	\end{flalign}
\end{proposition}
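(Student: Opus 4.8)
The plan is to read the displayed chain in the statement from top to bottom and justify each link in turn. The top equality is nothing more than Lemma~\ref{Gaussian} applied to the Gaussian pair $\left( \mathbf{x}, \mathbf{y} \right)$ with covariance matrices $\Sigma_{\mathbf{x}}$ and $\Sigma_{\mathbf{y}}$, so it needs no work beyond quoting that lemma.

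For the middle inequality, I would first observe that since $\mathbf{x}$ is Gaussian and $\Sigma_{\mathbf{x}} = \Lambda_{\mathbf{x}}$ is diagonal, the scalar components $\mathbf{x} \left( 1 \right), \ldots, \mathbf{x} \left( m \right)$ are mutually independent --- for a jointly Gaussian vector, having a diagonal covariance matrix is equivalent to independence of the components. Partitioning $\mathbf{x}$ and $\mathbf{y}$ into their $m$ scalar components and invoking the general ($n = m$, each block one-dimensional) form of Proposition~\ref{p2} then gives, verbatim,
\[
\mathrm{KL} \left( p_{\mathbf{y}} \| p_{\mathbf{x}} \right) \geq \sum_{i=1}^{m} \mathrm{KL} \left( p_{\mathbf{y} \left( i \right)} \| p_{\mathbf{x} \left( i \right)} \right),
\]
with equality whenever $\mathbf{y} \left( 1 \right), \ldots, \mathbf{y} \left( m \right)$ are mutually independent. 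As $\mathbf{y}$ is Gaussian, the latter holds precisely when $\Sigma_{\mathbf{y}}$ is diagonal, i.e., $\Sigma_{\mathbf{y}} = \Lambda_{\mathbf{y}}$, which is exactly the equality condition asserted in the proposition.

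For the bottom equality, each $\mathbf{x} \left( i \right)$ is a scalar zero-mean Gaussian of variance $\sigma_{\mathbf{x} \left( i \right)}^2$ and each $\mathbf{y} \left( i \right)$ a scalar zero-mean Gaussian of variance $\sigma_{\mathbf{y} \left( i \right)}^2$, so the scalar specialization of Lemma~\ref{Gaussian} (the $m = 1$ formula recorded immediately after it) gives
\[
\mathrm{KL} \left( p_{\mathbf{y} \left( i \right)} \| p_{\mathbf{x} \left( i \right)} \right) = \frac{1}{2} \left[ \frac{ \sigma_{\mathbf{y} \left( i \right)}^2 }{ \sigma_{\mathbf{x} \left( i \right)}^2 } - \ln \left( \frac{ \sigma_{\mathbf{y} \left( i \right)}^2 }{ \sigma_{\mathbf{x} \left( i \right)}^2 } \right) - 1 \right],
\]
and summing over $i = 1, \ldots, m$ produces the right-hand side of the chain.

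I do not anticipate a genuine analytic obstacle here; the one step that deserves care is the passage from ``diagonal covariance'' to ``independent components,'' which is special to the Gaussian setting and is precisely why the proposition is stated for Gaussian $\mathbf{x}$ and $\mathbf{y}$. It is also worth noting that, since $\mathrm{tr} \left( \Sigma_{\mathbf{y}} \Lambda_{\mathbf{x}}^{-1} \right) = \sum_{i=1}^{m} \sigma_{\mathbf{y} \left( i \right)}^2 / \sigma_{\mathbf{x} \left( i \right)}^2$ (the off-diagonal entries of $\Sigma_{\mathbf{y}}$ not contributing to the trace), the inequality in the proposition collapses to $\det \Sigma_{\mathbf{y}} \leq \prod_{i=1}^{m} \sigma_{\mathbf{y} \left( i \right)}^2$, which is exactly Hadamard's inequality for the positive definite matrix $\Sigma_{\mathbf{y}}$; this is the ``property of positive definite matrices'' alluded to before the statement, and it furnishes an alternative, purely linear-algebraic proof with the same equality case, namely diagonal $\Sigma_{\mathbf{y}}$.
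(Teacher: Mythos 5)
Your proof is correct and follows essentially the same route as the paper's: it uses the fact that a diagonal $\Sigma_{\mathbf{x}}$ makes the Gaussian components $\mathbf{x}\left(1\right),\ldots,\mathbf{x}\left(m\right)$ mutually independent, invokes Proposition~\ref{p2} with $m$ scalar blocks for the middle inequality, and applies Lemma~\ref{Gaussian} (and its scalar specialization) for the two equalities, with the same equality condition of diagonal $\Sigma_{\mathbf{y}}$. Your closing observation that the inequality collapses to Hadamard's inequality $\det \Sigma_{\mathbf{y}} \leq \prod_{i=1}^{m} \sigma_{\mathbf{y}\left(i\right)}^2$ is a nice bonus that makes explicit the ``property of positive definite matrices'' the paper only alludes to after the proposition.
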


\vspace{3mm}

\begin{proof}
	Note first that $\Sigma_{\mathbf{x}}$ is a diagonal matrix means that $\mathbf{x} \left( 1 \right), \ldots, \mathbf{x} \left( m \right)$ are mutually independent.
	It then follows from Proposition~\ref{p2} that 
	\begin{flalign}
	\mathrm{KL} \left( p_{\mathbf{y}} \| p_{\mathbf{x}} \right)
	\geq \sum_{i=1}^{m} \mathrm{KL} \left( p_{\mathbf{y} \left( i \right)} \| p_{\mathbf{x} \left( i \right)} \right),
	\nonumber
	\end{flalign}
	where equality holds if $\mathbf{y} \left( 1 \right), \ldots, \mathbf{y} \left( m \right)$ are mutually independent, i.e., if $\Sigma_{\mathbf{y}}$ is a diagonal matrix. In addition, since $\mathbf{x}$ and $\mathbf{y}$ are Gaussian (and hence $\mathbf{x} \left( 1 \right), \ldots, \mathbf{x} \left( m \right)$ and $\mathbf{y} \left( 1 \right), \ldots, \mathbf{y} \left( m \right)$ are all Gaussian), it is known from Lemma~\ref{Gaussian} that
	\begin{flalign}
	\mathrm{KL} \left( p_{\mathbf{y}} \| p_{\mathbf{x}} \right) = \frac{1}{2} \left[ \mathrm{tr} \left( \Sigma_{\mathbf{y}} \Sigma_{\mathbf{x}}^{-1} \right) - \ln \det \left( \Sigma_{\mathbf{y}} \Sigma_{\mathbf{x}}^{-1} \right) - m \right],
	\nonumber
	\end{flalign}
	and
	\begin{flalign}
	\mathrm{KL} \left( p_{\mathbf{y} \left( i \right)} \| p_{\mathbf{x} \left( i \right)} \right) = \frac{1}{2} \left\{  \frac{ \sigma_{\mathbf{y} \left( i \right)}^2}{\sigma_{\mathbf{x} \left( i \right)}^2} 
	-  \ln  \left[ \frac{ \sigma_{\mathbf{y} \left( i \right)}^2 }{\sigma_{\mathbf{x} \left( i \right)}^2} \right] - 1 \right\}.
	\nonumber
	\end{flalign}
	This concludes the proof.
\end{proof}

Proposition~\ref{p3} is essentially a property of positive definite matrices that goes beyond the KL divergence. To state it without relating to KL divergence, for any 
two positive definite  matrices $\Sigma_{\mathbf{x}}$ and $\Sigma_{\mathbf{y}}$, if  $\Sigma_{\mathbf{x}}$ is a diagonal matrix, then it holds that
\begin{flalign}
&\frac{1}{2} \left[ \mathrm{tr} \left( \Sigma_{\mathbf{y}} \Sigma_{\mathbf{x}}^{-1} \right) - \ln \det \left( \Sigma_{\mathbf{y}} \Sigma_{\mathbf{x}}^{-1} \right) - m \right] \nonumber \\
&~~~~ \geq \frac{1}{2} \left\{ \sum_{i=1}^{m} \left[ \frac{ \sigma_{\mathbf{y} \left( i \right)}^2}{\sigma_{\mathbf{x} \left( i \right)}^2} \right]
-  \sum_{i=1}^m \ln  \left[ \frac{ \sigma_{\mathbf{y} \left( i \right)}^2 }{\sigma_{\mathbf{x} \left( i \right)}^2} \right] - m \right\},
\nonumber
\end{flalign}
or equivalently,
\begin{flalign}
&\mathrm{tr} \left( \Sigma_{\mathbf{y}} \Sigma_{\mathbf{x}}^{-1} \right) - \ln \det \left( \Sigma_{\mathbf{y}} \Sigma_{\mathbf{x}}^{-1} \right) \nonumber \\
&~~~~ \geq \sum_{i=1}^{m} \left[ \frac{ \sigma_{\mathbf{y} \left( i \right)}^2}{\sigma_{\mathbf{x} \left( i \right)}^2} \right]
-  \sum_{i=1}^m \ln  \left[ \frac{ \sigma_{\mathbf{y} \left( i \right)}^2 }{\sigma_{\mathbf{x} \left( i \right)}^2} \right],
\nonumber
\end{flalign}
where the equalities hold if $\Sigma_{\mathbf{y}}$ is a diagonal matrix.

%
%
%

%

Finally, note that Proposition~\ref{c1} can be proved by invoking Proposition~\ref{p1}, Proposition~\ref{p2}, and Proposition~\ref{p3} in order. Overall, it is concluded that that independent Gaussian distributions minimize the KL divergence from given independent Gaussian distributions.

\section{Conclusion}

We have presented a property of the KL divergence in this short note: Independent Gaussian distributions minimize the KL divergence from given independent Gaussian distributions. It might be interesting to examine the implications of this property, as in \cite{KLarxiv}.

\balance

\bibliographystyle{IEEEtran}
\bibliography{references}

\end{document}